\newtheorem{theorem}{Theorem}[section]
\newtheorem{lemma}[theorem]{Lemma}
\theoremstyle{definition}
\newtheorem{corollary}[theorem]{Corollary}
\theoremstyle{remark}
\numberwithin{equation}{section}
\begin{document}

\title{Positivity of the two-dimensional Brown-Ravenhall operator}

\author[S. Walter]{Stefan Walter}
\address{Ludwig-Maximilians-Universität München,
Mathematisches Institut,
Theresienstr. 39, 
80333 München}
\curraddr{}
\email{stwalter@vr-web.de}
\thanks{}

\subjclass[2000]{Primary 81Q10; Secondary 47F05}

\date{}

\dedicatory{}

\begin{abstract}
We determine the critical coupling of the two-dimensional Brown-Ravenhall operator with Coulomb potential. Boundedness from below has essentially been proven by Bouzouina, whose work however contains a trivial error leading to a wrong constant exactly one half of the actual critical constant. Furthermore we show that the operator is in fact positive. Our proof of that is for the most part analogous to Tix's proof of the corresponding result for the three-dimensional operator.    
\end{abstract}

\maketitle

\section{Introduction and partial wave analysis} 
The two-dimensional free Dirac operator is defined in the Hilbert space $L^2\left(\mathbb{R}^2,\mathbb{C}^2\right)$ by
\[
 D = \frac{c \hbar}{i} \sigma_1 \frac{\partial}{\partial x_1} + \frac{c \hbar}{i} \sigma_2 \frac{\partial}{\partial x_2} + m c^2 \sigma_3,
\]
where $\hbar$ denotes Planck's constant, $c$ the speed of light, $m$ the electron mass and 
\[
 \sigma_1 = \begin{pmatrix} 0 & 1 \\ 1 & 0 \end{pmatrix}, \qquad \sigma_2 = \begin{pmatrix} 0 & -i \\ i & 0 \end{pmatrix}, \qquad \sigma_3 = \begin{pmatrix} 1 & 0 \\ 0 & -1 \end{pmatrix}
\] 
the Pauli matrices. $D$ is a self-adjoint operator on $H^1\left(\mathbb{R}^2,\mathbb{C}^2\right)$, therefore we can define
\[
 \Lambda_+ := \chi_{(0,\infty)}(D).
\]
$\Lambda_+$ is obviously a projection and $\mathcal{H}_+ := \Lambda_+\left(L^2\left(\mathbb{R}^2,\mathbb{C}^2\right)\right)$ is called the positive spectral subspace of $D$.

Finally the two-dimensional Brown-Ravenhall operator with Coulomb potential is defined in the Hilbert space $\mathcal{H}_+$ by 
\[
 \mathbf{B} = \Lambda_+\left(D-\frac{\delta}{|\mathbf{x}|}\right)\Lambda_+, \hspace{3ex} \delta>0
\]
By scaling this operator is unitarily equivalent to the one with $m=\hbar=c=1$ up to the factor $mc^2$ and a change of $\delta$ to $\frac{\delta}{\hbar c}$. In the following we will drop $m$, $c$ and $\hbar$. The positivity of the three-dimensional Brown-Ravenhall operator has been proven by Tix \cite{Tix} and, by an alternative method, Burenkov and Evans \cite{BurenkovEvans}. Here we are interested in the two-dimensional result because of its importance in the description of graphene.

Our main theorem is
\begin{theorem}\label{C}
 Let 
\[
 \delta_c := \left(\frac{\Gamma(\tfrac{1}{4})^4}{8\pi^2}+\frac{8\pi^2}{\Gamma(\tfrac{1}{4})^4}\right)^{-1}.
\]
If $\delta\leq\delta_c$, then 
\[
 \mathbf{B} \geq (1-2\delta),
\]
in particular $\mathbf{B}$ is positive. If $\delta > \delta_c$, then $\mathbf{B}$ is unbounded from below.
\end{theorem}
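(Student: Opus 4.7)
The plan is to adapt Tix's three-dimensional argument via a partial-wave decomposition and a Mellin-transform estimate in the critical angular-momentum channel.

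\emph{Partial waves.} The operator $\mathbf{B}$ commutes with the two-dimensional total angular momentum $J = -i\partial_\theta + \tfrac12\sigma_3$, whose spectrum is $\mathbb{Z}+\tfrac12$. In polar momentum coordinates, a positive-energy state can be written $\psi_+(p,\theta) = u_+(p,\theta)\varphi(p,\theta)$, where $u_+(p,\theta)$ is a unit spinor spanning the range of $\Lambda_+(p)$, proportional to $(E(p)+1,\,p\,e^{i\theta})^T$ with $E(p)=\sqrt{p^2+1}$. Restricting to a fixed eigenvalue $j\in\mathbb{Z}+\tfrac12$ of $J$ picks out a single angular mode of $\varphi$; on the resulting Hilbert space $L^2(\mathbb{R}_+, p\,dp)$, the block $\mathbf{B}_j$ acts as multiplication by $E(p)$ minus $\delta$ times a positive integral operator $K_j$, whose kernel is the angular average of the two-dimensional Fourier transform $2\pi/|\mathbf{p}-\mathbf{p}'|$ of $|\mathbf{x}|^{-1}$ weighted by the spinor overlap $\overline{u_+(p,\theta)}u_+(p',\theta')$. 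A monotonicity check shows that the lowest channels $|j|=\tfrac12$ dominate, so everything reduces to a scalar inequality on $L^2(\mathbb{R}_+, p\,dp)$.

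\emph{Critical-channel estimate and unboundedness.} To bound $\mathbf{B}_{1/2}$ from below, first replace $E(p)$ by $|\mathbf{p}|$ on the Coulomb side (using $E(p)\geq|\mathbf{p}|$) and pass to the logarithmic variable $t=\log p$; the massless version of $K_{1/2}$ is dilation-invariant and so becomes a convolution in $t$. The Mellin transform along the critical line $\mathrm{Re}\,s=\tfrac12$ diagonalises this convolution, and the supremum of the resulting multiplier is exactly $\delta_c^{-1}=\Gamma(\tfrac14)^4/(8\pi^2)+8\pi^2/\Gamma(\tfrac14)^4$, the closed-form value obtained from a beta-integral reduction. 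Reassembling the mass term carefully --- using $E(p)\geq 1$ together with the identity $E(p) - |\mathbf{p}| = 1/(E(p)+|\mathbf{p}|)$ to reinsert what was discarded in the massless bound --- yields $\mathbf{B}\geq (1-2\delta)$ for $\delta\leq\delta_c$. For the converse, pick a near-extremiser $\varphi_0$ of the Mellin multiplier and test $\mathbf{B}$ on the scaling sequence $\varphi_\lambda(p)=\lambda\varphi_0(\lambda p)$ in the $j=\tfrac12$ channel: the $L^2$-norm is preserved while $\langle\varphi_\lambda,\mathbf{B}\varphi_\lambda\rangle$ grows linearly in $\lambda$ with negative coefficient (the mass remainder stays bounded), so $\langle\varphi_\lambda,\mathbf{B}\varphi_\lambda\rangle\to-\infty$.

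\emph{Main obstacle.} The crux is the explicit Mellin computation in the critical channel. The closed form $\Gamma(\tfrac14)^4/(8\pi^2)+8\pi^2/\Gamma(\tfrac14)^4$ hinges on keeping the spinor-overlap phase $\overline{u_+(p,\theta)}u_+(p',\theta')$ intact throughout the partial-wave reduction; dropping or mishandling this phase --- as Bouzouina apparently did, according to the abstract --- costs precisely a factor of two in $\delta_c$. The rest of the argument is essentially bookkeeping once the critical channel has been isolated and the Mellin diagonalisation is in place.
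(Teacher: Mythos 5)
Your partial-wave reduction and the unboundedness half of the argument are in the right spirit and match the paper: the extremal angular channel dominates (the paper proves $K_0\geq K_k$ for all $k$), and testing against a truncated $1/p$ in that channel produces a logarithmic blow-up (the paper's Lemma~\ref{F}). Your Mellin identification of $\delta_c^{-1}$ as the value of the massless symbol at $\tau=0$ is also consistent with the paper's constants $I_{\pm 1/2}=\Gamma(\tfrac14)^4/(4\pi)$ and $16\pi^3/\Gamma(\tfrac14)^4$.

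The positivity half, however, has a genuine gap, and it is precisely where the paper does something different. Two concrete problems. First, ``replace $E(p)$ by $|\mathbf{p}|$ on the Coulomb side'' is not a one-sided bound on the kernel: with $K_0=\beta_1 Q_{-1/2}+\beta_2 Q_{1/2}$, $\beta_1=\tfrac12\sqrt{(1+1/e(p))(1+1/e(p'))}$ and $\beta_2=\tfrac12\sqrt{(1-1/e(p))(1-1/e(p'))}$, so in the massless limit $\beta_1$ decreases to $\tfrac12$ while $\beta_2$ increases to $\tfrac12$; the difference $K_0^{\mathrm{massless}}-K_0$ has both signs and the massless kernel neither dominates nor is dominated by $K_0$. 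Second, even granting a sharp Mellin bound in the massless model, the ``reassembly'' of the mass term is exactly the difficulty, not bookkeeping: the massless Schur bound spends the entire $p$ against the kernel, leaving only $e(p)-p=1/(e(p)+p)$, which decays at infinity, to yield a uniform $1-2\delta$, and no such bound follows. The paper instead performs a Tix-style Cauchy--Schwarz with explicit trial functions $h_0,h_1$ that are Legendre $P$-functions (radial parts of the 2D Fourier transforms of $r^{-1/2}e^{-r}e^{ik\theta}$, $k=0,1$), reduces the resulting $\inf_p\mathcal{E}(p)$ to a single scalar function $f(x)$ of $x=1/e(p)\in(0,1]$ built from ratios $P^{-k}_{-1/2}(x)/P^{-k}_{1/2}(x)$, and proves $\inf f=f(1)=1-2\delta_c$ via two separate power-series estimates (a hypergeometric expansion near $x=1$ for $x\geq 0.4$ and a Legendre expansion near $x=0$ for $x\leq 0.4$), controlled by alternating-series bounds. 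A Mellin diagonalisation never appears; if you want to pursue your route you must find a Schur weight adapted to the massive kernel, not the massless one, and that is essentially Tix's construction.
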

The approximate value of $\delta_c$ is 0.378.

The first part of the proof has been carried out by Bouzouina \cite{Bouzouina}, unfortunately with a small error that leads to a wrong value of $\delta_c$. We proceed to sketch his development without proofs and correct the error where it was introduced.

The free Dirac Operator corresponds in Fourier Space to the $2\times2$ matrix multiplication operator given by
\[
 D(\mathbf{p})=\begin{pmatrix} 1 & p_1-ip_2 \\ p_1+ip_2 & -1 \end{pmatrix},
\]
which has the pointwise eigenvalues $\pm E(\mathbf{p})$, where 
\[
 E(\mathbf{p})=\sqrt{|\mathbf{p}|^2+1}.
\]
$\Lambda_+$ corresponds to the pointwise projection onto the eigenspace of $E(\mathbf{p})$. Thus the Fourier transform of any element of $\mathcal{H}_+$ can be written as $u\cdot\xi$, where $u$ is an element of $L^2\left(\mathbb{R}^2\right)$ and $\xi$ denotes the normed eigenvector associated with $E(\mathbf{p})$. So we can reduce the quadratic form belonging to $\mathbf{B}$ to one over $L^2\left(\mathbb{R}^2\right)$:
\[
 \langle\psi,\mathbf{B}\psi\rangle=\int E(\mathbf{p})|u(\mathbf{p})|^2\;d\mathbf{p} - \frac{\delta}{2\pi}\int\int u(\mathbf{p})\overline{u(\mathbf{p}')}K(\mathbf{p},\mathbf{p}')\;d\mathbf{p}d\mathbf{p}'=:\langle u,\mathbf{b}u\rangle,
\]
where
\[
 K(\mathbf{p},\mathbf{p}') = \frac{(E(\mathbf{p})+1)(E(\mathbf{p}')+1)+\mathbf{p}\overline{\mathbf{p}'}}{N(\mathbf{p})N(\mathbf{p}')|\mathbf{p}-\mathbf{p}'|}
\]
and 
\[
 N(\mathbf{p}):=\sqrt{2E(\mathbf{p})(E(\mathbf{p})+1)}.
\]

For $u\in L^2\left(\mathbb{R}^2\right)$ we can write
\[
 u(\mathbf{p})=u\left(re^{i\theta}\right)=\frac{1}{\sqrt{2\pi}}\sum_{k\in\mathbb{Z}}\frac{1}{\sqrt{r}}a_k(r)e^{ik\theta},
\]
where the $a_k$ are in $L^2((0,\infty))$. Using this decomposition we get
\begin{theorem}\label{A}
 For any $u\in L^2\left(\mathbb{R}^2,\sqrt{1+|\mathbf{p}|^2}\;d\mathbf{p}\right)$ holds
\[
 \langle u,\mathbf{b}u\rangle=\sum_{k\in\mathbb{Z}}\langle a_k,b_k a_k\rangle,
\]
where
\[
 \langle a_k,b_k a_k\rangle = \int_0^\infty e(r)|a_k(r)|^2\;dr-\frac{\delta}{\pi}\int_0^\infty\int_0^\infty a_k(r)\overline{a_k(r')}K_k(r,r')\;drdr'
\]
and for $k\geq 0$
\[
 K_k(r,r')=\beta_1(r,r')Q_{k-1/2}\left(\frac{1}{2}\left(\frac{r}{r'}+\frac{r'}{r}\right)\right)+\beta_2(r,r')Q_{k+1/2}\left(\frac{1}{2}\left(\frac{r}{r'}+\frac{r'}{r}\right)\right),
\]
whereas for $k<0$
\[
 K_k(r,r')=\beta_1(r,r')Q_{-k-1/2}\left(\frac{1}{2}\left(\frac{r}{r'}+\frac{r'}{r}\right)\right)+\beta_2(r,r')Q_{-k-3/2}\left(\frac{1}{2}\left(\frac{r}{r'}+\frac{r'}{r}\right)\right).
\]
We have used
\[
 \beta_1(r,r')=\frac{(e(r)+1)(e(r')+1)}{n(r)n(r')}, \hspace{3ex} \beta_2(r,r')=\frac{rr'}{n(r)n(r')},
\]
and
\[
 e(r)=\sqrt{r^2+1}, \hspace{3ex} n(r)=\sqrt{2e(r)(e(r)+1)}.
\] 
\end{theorem}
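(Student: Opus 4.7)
The plan is to substitute the angular Fourier series for $u$ into both pieces of $\langle u,\mathbf{b}u\rangle$ and collapse the double angular integrals via orthogonality of $e^{ik\theta}$. The only nontrivial analytic input is the standard expansion of $|\mathbf{p}-\mathbf{p}'|^{-1}$ in half-integer Legendre functions $Q_{|m|-1/2}$; everything else is bookkeeping.

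The kinetic term is immediate. Since $E(\mathbf{p})=e(r)$ is rotation invariant, $\int_0^{2\pi}e^{i(k-k')\theta}d\theta=2\pi\delta_{kk'}$ diagonalizes the double sum; the factor $r^{-1}$ coming from the $r^{-1/2}$ in the decomposition cancels the Jacobian $r$ in $d\mathbf{p}=r\,dr\,d\theta$, leaving $\sum_{k\in\mathbb{Z}}\int_0^\infty e(r)|a_k(r)|^2dr$. For the Coulomb term, the crucial observation is that $\mathbf{p}\overline{\mathbf{p}'}=rr'e^{i(\theta-\theta')}$, so the kernel splits as
\[
 K(\mathbf{p},\mathbf{p}')=\beta_1(r,r')\frac{1}{|\mathbf{p}-\mathbf{p}'|}+\beta_2(r,r')\frac{e^{i(\theta-\theta')}}{|\mathbf{p}-\mathbf{p}'|},
\]
and all remaining angular dependence sits inside $|\mathbf{p}-\mathbf{p}'|^{-1}$.

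I would then invoke the toroidal-harmonic expansion
\[
 \frac{1}{|\mathbf{p}-\mathbf{p}'|}=\frac{1}{\pi\sqrt{rr'}}\sum_{m\in\mathbb{Z}}Q_{|m|-1/2}\!\left(\tfrac12\bigl(\tfrac{r}{r'}+\tfrac{r'}{r}\bigr)\right)e^{im(\theta-\theta')},
\]
which follows from the Fourier series $(2(\chi-\cos\phi))^{-1/2}=\pi^{-1}\sum_m Q_{|m|-1/2}(\chi)e^{im\phi}$ together with the integral representation $Q_{n-1/2}(\chi)=\int_0^\pi\cos(n\phi)(2(\chi-\cos\phi))^{-1/2}d\phi$. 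Substituting and integrating $\theta,\theta'$ against $e^{i(k\theta-k'\theta')}$ forces $k=k'$ and selects $m=-k$ in the $\beta_1$ piece, producing $Q_{|k|-1/2}$; the extra $e^{i(\theta-\theta')}$ in the $\beta_2$ piece shifts this to $m=-k-1$ and yields $Q_{|k+1|-1/2}$. Splitting $k\geq 0$ (where $|k|=k$, $|k+1|=k+1$) from $k<0$ (where $|k|=-k$, $|k+1|=-k-1$) then gives the two stated formulas, and the constants $\tfrac{\delta}{2\pi}\cdot\tfrac{1}{2\pi}\cdot\tfrac{(2\pi)^2}{\pi}$ consolidate to the announced $\delta/\pi$.

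The main obstacle is the termwise interchange of the doubly-infinite sum in the expansion with the double integral over $\mathbf{p},\mathbf{p}'$, since the $Q_{|m|-1/2}$ are singular on the diagonal $r=r'$ and the Coulomb kernel is only conditionally summable. The weighted $L^2$ hypothesis on $u$ provides the required absolute control via a two-dimensional relativistic Hardy-type estimate bounding $\int\!\int|u(\mathbf{p})\overline{u(\mathbf{p}')}|K(\mathbf{p},\mathbf{p}')d\mathbf{p}d\mathbf{p}'$ by a constant times $\int\sqrt{1+|\mathbf{p}|^2}|u(\mathbf{p})|^2d\mathbf{p}$, which legitimizes Fubini and summation in any order and thus closes the argument.
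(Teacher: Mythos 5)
Your proposal is correct and, modulo packaging, follows the same route as the paper (which defers almost entirely to Bouzouina's Lemmas 2.1--2.2 and only corrects the factor-of-two error in the double angular integral). Where the paper's correction computes $\int_0^{2\pi}\int_0^{2\pi}e^{il\theta}e^{-il'\theta'}\,(1-q\cos(\theta-\theta'))^{-1/2}\,d\theta\,d\theta'=2\pi\delta_{ll'}\int_0^{2\pi}\cos(l\theta)(1-q\cos\theta)^{-1/2}\,d\theta$ and then recognizes the single remaining integral as a Legendre function, you instead expand $|\mathbf{p}-\mathbf{p}'|^{-1}$ as the Fourier--Legendre series $\frac{1}{\pi\sqrt{rr'}}\sum_m Q_{|m|-1/2}(\chi)e^{im(\theta-\theta')}$ up front and let orthogonality select the mode. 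Those two steps are transposes of each other, resting on the identical integral representation $Q_{n-1/2}(\chi)=\int_0^\pi\cos(n\phi)(2(\chi-\cos\phi))^{-1/2}\,d\phi$. Your tracking of the mode shift $m=-k$ versus $m=-k-1$ for the $\beta_1$ and $\beta_2$ pieces, the case split $k\geq0$ versus $k<0$ producing $Q_{k\mp1/2}$ versus $Q_{-k-1/2}$, $Q_{-k-3/2}$, and the constant $\frac{\delta}{2\pi}\cdot\frac{1}{2\pi}\cdot\frac{1}{\pi}\cdot(2\pi)^2=\frac{\delta}{\pi}$ are all correct, and notably you avoid the doubling error that the paper is at pains to fix. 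The final paragraph on interchanging sum and integral is the only place where you are less precise than one might want: invoking a relativistic Hardy-type bound is defensible but somewhat heavy-handed for a statement that, on a dense set of nice $u$ and then by continuity, follows from absolute convergence of the tail of the $Q_{|m|-1/2}$ series off the diagonal together with the known square-integrability of the Coulomb form on the weighted space; still, this is a standard and correct way to close the argument.
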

The $Q_i$ are Legendre functions of the second kind.
\begin{proof}
 This is Lemma 2.1 and Lemma 2.2 of \cite{Bouzouina} with the difference that the $K_k(r,r')$ of \cite{Bouzouina} are incorrectly twice as big as here. In the proof of Lemma 2.1 Bouzouina claims that for any $q\in]0,1[$ and any $l,l'\in\mathbb{Z}$
\[
 \int_0^{2\pi}\int_0^{2\pi}\frac{e^{il\theta}e^{-il'\theta'}}{\sqrt{1-q\cos(\theta-\theta')}}\;d\theta d\theta' = 4\pi\delta_{l,l'}\int_0^{2\pi}\frac{\cos(l\theta)}{\sqrt{1-q\cos(\theta)}}\;d\theta,
\]
when in fact
\[
 \int_0^{2\pi}\int_0^{2\pi}\frac{e^{il\theta}e^{-il'\theta'}}{\sqrt{1-q\cos(\theta-\theta')}}\;d\theta d\theta' = \int_0^{2\pi}e^{i(l-l')\theta'}\int_0^{2\pi}\frac{e^{il(\theta-\theta')}}{\sqrt{1-q\cos(\theta-\theta')}}\;d\theta d\theta'
\]
\[
 =2\pi\delta_{l,l'}\int_0^{2\pi}\frac{e^{il\theta}}{\sqrt{1-q\cos\theta}}\;d\theta=2\pi\delta_{l,l'}\int_0^{2\pi}\frac{\cos(l\theta)}{\sqrt{1-q\cos\theta}}\;d\theta,
\]
as 
\[
 \int_0^{2\pi}\frac{\sin(l\theta)}{\sqrt{1-q\cos\theta}}\;d\theta=0.
\]
\end{proof}
The Legendre functions $Q_{k-1/2}$, $k\in\mathbb{N}$ occurring in Theorem \ref{A} are positive on $]1,\infty[$ and $(Q_{k-1/2})_{k\in\mathbb{N}}$ is a decreasing sequence (see \cite{Bouzouina}*{Lemma 2.2}). From that it is easy to verify that $K_0\geq K_k$ for all $k\in\mathbb{Z}$. Therefore we have
\begin{corollary}\label{B}
 We have 
\begin{align*}
 \inf\left\{\langle u,\mathbf{b}u\rangle|u\in L^2\left(\mathbb{R}^2,\sqrt{1+|\mathbf{p}|^2}\;d\mathbf{p}\right), ||u||_2=1\right\}\\
=\inf\left\{\langle f,b_0f\rangle|f\in L^2\left((0,\infty),\sqrt{1+r^2}\;dr\right), ||f||_2=1\right\}
\end{align*}
\end{corollary}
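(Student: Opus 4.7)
The plan is to pass from a general $u$, via the angular decomposition of Theorem \ref{A}, to the $k=0$ mode, which is the worst one. The nontrivial inequality is $\geq$; the reverse is immediate by evaluating on a radially symmetric test function. The key input will be the pointwise bound $K_k \leq K_0$ noted just above the corollary.

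For the $\geq$ direction I would take $u \in L^2(\mathbb{R}^2, \sqrt{1+|\mathbf{p}|^2}\,d\mathbf{p})$ with $\|u\|_2=1$ and expansion $u(re^{i\theta}) = (2\pi)^{-1/2}\sum_k r^{-1/2} a_k(r) e^{ik\theta}$, so that Parseval gives $\sum_k \|a_k\|_2^2 = 1$. Within each $k$ the kinetic term $\int_0^\infty e(r)|a_k(r)|^2 \,dr$ depends only on $|a_k|$, while nonnegativity and symmetry of $K_k$ together with $K_k\leq K_0$ yield
\[
\int_0^\infty\!\!\int_0^\infty a_k(r)\overline{a_k(r')} K_k(r,r')\,dr\,dr' \leq \int_0^\infty\!\!\int_0^\infty |a_k(r)|\,|a_k(r')|\, K_0(r,r')\,dr\,dr'.
\]
Since $\delta>0$ this gives $\langle a_k, b_k a_k\rangle \geq \langle |a_k|, b_0|a_k|\rangle \geq I_0\,\|a_k\|_2^2$, where $I_0$ denotes the claimed right-hand infimum; summing over $k$ produces $\langle u, \mathbf{b} u\rangle \geq I_0$.

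For the $\leq$ direction, given $f \in L^2((0,\infty), \sqrt{1+r^2}\,dr)$ with $\|f\|_2=1$, I would set $u(re^{i\theta}) := (2\pi r)^{-1/2} f(r)$. Then $u$ is radial, lies in the right weighted space with $\|u\|_2 = \|f\|_2 = 1$, and its expansion has only $a_0=f$ nonzero, so Theorem \ref{A} directly gives $\langle u, \mathbf{b} u\rangle = \langle f, b_0 f\rangle$, and taking the infimum over such $f$ finishes the argument.

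I don't expect serious obstacles: the substantive input $K_0 \geq K_k$ is in hand, and the remainder is a routine symmetry reduction. The one step warranting care is the replacement of $a_k$ by $|a_k|$ in the potential term, which hinges on $K_k$ being pointwise nonnegative — this holds because each $Q_{k-1/2}$ is positive on $(1,\infty)$ (as noted just before the corollary) and the prefactors $\beta_1, \beta_2$ are manifestly positive.
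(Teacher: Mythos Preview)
Your argument is correct and complete. The paper itself does not give a proof but simply cites Bouzouina's Corollary~2.4; your write-up is precisely the standard symmetry reduction underlying that result, using the pointwise inequality $K_k\le K_0$ together with positivity of the kernels to pass from each angular mode to the zeroth one.
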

\begin{proof}
 This is Corollary 2.4 of \cite{Bouzouina}.
\end{proof}
\section{Positivity of the operator}
In view of Corollary \ref{B} it suffices to prove the equivalent of Theorem \ref{C} for the form $b_0$.
\begin{theorem}
 If $\delta\leq\delta_c$, then 
\begin{equation}\label{D}
 b_0 \geq (1-2\delta).
\end{equation}
\end{theorem}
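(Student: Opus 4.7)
The plan is to adapt Tix's three-dimensional strategy: decouple the two-term kernel $K_0$ via a polar decomposition of $\beta_1, \beta_2$, diagonalise the resulting convolutions on $(0,\infty)$ by passing to the rapidity $t = \log r$, and combine the Plancherel estimates with the mass gap $e(r) \geq 1$ to produce the claimed lower bound. First I record the factorisations $\beta_1(r,r') = c_1(r)c_1(r')$ and $\beta_2(r,r') = c_2(r) c_2(r')$ with $c_1(r) := \sqrt{(e(r)+1)/(2e(r))}$ and $c_2(r) := r/\sqrt{2e(r)(e(r)+1)}$, which satisfy $c_1^2 + c_2^2 \equiv 1$. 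Writing $u := c_1 a_0$ and $v := c_2 a_0$, the form becomes
\[
\langle a_0, b_0 a_0\rangle = \int_0^\infty e(r)\bigl(|u|^2 + |v|^2\bigr)\,dr - \frac{\delta}{\pi}\bigl[(u, T_{-1/2} u) + (v, T_{1/2} v)\bigr],
\]
where $T_\nu$ is the integral operator with kernel $Q_\nu\bigl(\frac{1}{2}(r/r'+r'/r)\bigr)$.

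Passing to the rapidity variable via $\phi(t) := e^{t} u(e^{t})$ converts $(u, T_{\pm 1/2} u)$ into $(\phi, C_{Q_{\pm 1/2}}\phi)_{L^2(\mathbb{R})}$, where $C_{Q_\nu}$ denotes convolution with $Q_\nu(\cosh\,\cdot)$; Plancherel on $\mathbb{R}$ then gives
\[
(u, T_{\pm 1/2} u) \leq \hat Q_{\pm 1/2}(0)\int_0^\infty r|u(r)|^2\,dr, \qquad \hat Q_\nu(0) = \int_{\mathbb{R}} Q_\nu(\cosh\tau)\,d\tau.
\]
Using the integral representation $Q_{\nu-1/2}(\cosh\alpha) = 2^{-1/2}\int_\alpha^\infty e^{-\nu t}(\cosh t -\cosh\alpha)^{-1/2}\,dt$ and Fubini, each $\hat Q_{\pm 1/2}(0)$ reduces to an iterated complete elliptic integral that can be summed in closed form, producing $\hat Q_{-1/2}(0) = \Gamma(1/4)^4/(4\pi)$ and $\hat Q_{1/2}(0) = 16\pi^3/\Gamma(1/4)^4$; their sum divided by $2\pi$ is precisely $\delta_c^{-1}$. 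The central step is then to upgrade these bounds to the critical inequality $b_0 \geq 1 - 2\delta_c$ when $\delta = \delta_c$. Inserting $c_1^2 = (e(r)+1)/(2e(r))$ and $c_2^2 = (e(r)-1)/(2e(r))$ into the Plancherel bounds controls the potential by an integral of $|a_0|^2$ weighted by $r$ and by $r/e(r)$. The hard part, and the main obstacle I anticipate, is that the naive decomposition is not pointwise tight at large $r$; to close the gap one must exploit the explicit coupling $v = \sqrt{(e(r)-1)/(e(r)+1)}\,u$ between the two pieces together with the mass correction $e(r) - r = (e(r)+r)^{-1}$, most likely via a weighted Schur-type argument in the spirit of the Cauchy--Schwarz step Tix uses in three dimensions.

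With the critical bound $b_0|_{\delta = \delta_c} \geq 1 - 2\delta_c$ in hand, the theorem for arbitrary $\delta \in [0, \delta_c]$ follows from the convex combination
\[
b_0 = \frac{\delta}{\delta_c}\, b_0\big|_{\delta = \delta_c} + \Bigl(1 - \frac{\delta}{\delta_c}\Bigr) \int_0^\infty e(r)|a_0|^2\,dr,
\]
combined with $\int_0^\infty e(r)|a_0|^2\,dr \geq \|a_0\|_2^2$; applying the critical bound to the first summand and the mass gap to the second yields
\[
b_0 \geq \frac{\delta}{\delta_c}(1 - 2\delta_c)\|a_0\|_2^2 + \Bigl(1 - \frac{\delta}{\delta_c}\Bigr)\|a_0\|_2^2 = (1 - 2\delta)\|a_0\|_2^2,
\]
which is \eqref{D}.
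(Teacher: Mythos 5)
Your framing is right up to a point, and the endgame (the convex combination reducing to the critical case $\delta=\delta_c$, followed by the mass gap $e(r)\geq 1$) coincides with the paper's. But the core of the proof is missing, and the gap is not small.

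Your Plancherel step is correct as far as it goes: with $\phi(t)=e^{t}u(e^{t})$ one does get $(u,T_{\nu}u)\leq \hat Q_\nu(0)\int_0^\infty r|u(r)|^2\,dr$, and $\hat Q_{-1/2}(0)/(2\pi)+\hat Q_{1/2}(0)/(2\pi)=\delta_c^{-1}$. However, you can make explicit why this bound must fail to prove positivity: after inserting the weights $c_1^2=(e+1)/(2e)$, $c_2^2=(e-1)/(2e)$, the integrand of the would-be pointwise bound is $e(r)-r-\delta_c(A-B)\,r/e(r)$, with $A=\Gamma(1/4)^4/(8\pi^2)\approx 2.19$ and $B=8\pi^2/\Gamma(1/4)^4\approx 0.46$. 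As $r\to\infty$ this tends to $-\delta_c(A-B)\approx -0.65$, which is well below $1-2\delta_c\approx 0.24$. The Plancherel step only isolates the critical constant; it does not yield a positive operator.

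The sentence in which you defer to ``a weighted Schur-type argument in the spirit of the Cauchy--Schwarz step Tix uses'' is exactly where the substance of the proof lives, and the proposal supplies none of it. The paper's proof at this point introduces trial functions $h_0,h_1$ obtained from the radial Fourier transforms of $r^{a}e^{-r}e^{ik\theta}$, leading to the specific choice $h_k(p)=\sqrt{p}(p^2+1)^{-3/4}\Gamma(k+\tfrac{3}{2})P^{-k}_{1/2}((p^2+1)^{-1/2})$ with closed-form convolutions $I_k(p)$ via Legendre-$P$ functions; it then has to prove $\inf_{x\in(0,1]}f(x)=1-2\delta_c$ for the resulting explicit function $f$, which requires two separate power-series arguments (a hypergeometric expansion about $x=1$, valid down to $x\approx 0.4$, and an expansion of $P_{\pm 1/2}^{\mp k}$ about $x=0$ together with monotonicity of $\Gamma(x)/\Gamma(x+\tfrac{1}{2})$ to control the tails). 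None of this is anticipated or replaced in your plan, and no alternative route is offered. As written, the proposal establishes the reduction and correctly identifies the obstacle, but the theorem itself is not proved.
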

\begin{proof}
 The first part is an adaptation of the three-dimensional case in \cite{Tix}. Therefore we will be brief. It suffices to prove (\ref{D}) for $\delta=\delta_c$. Then the general case is proven in the following way:
\[
 \langle f,b_0f\rangle=\left(1-\frac{\delta}{\delta_c}\right)\int^\infty_0 e(p)|f(p)|^2\;dp\; 
\]
\[
+\;\frac{\delta}{\delta_c}\left(\int^\infty_0 e(p)|f(p)|^2\;dp - \frac{\delta_c}{\pi}\int^\infty_0 \int^\infty_0 \overline{f(p')}f(p)K_0(p,p')\;dpdp'\right)\geq
\]
\[
\left(1-\frac{\delta}{\delta_c}\right)||f||^2+\frac{\delta}{\delta_c}(1-2\delta_c)||f||^2=(1-2\delta)||f||^2
\]
Finally we need only consider positive $f$. By a simple calculation one gets
\[
 \beta_1(p,p')=\frac{1}{2}\sqrt{1+\frac{1}{e(p)}}\sqrt{1+\frac{1}{e(p')}}
\]
and
\[
\beta_2(p,p')=\frac{1}{2}\sqrt{1-\frac{1}{e(p)}}\sqrt{1-\frac{1}{e(p')}}
\]
By the usual application of the Cauchy-Schwarz inequality introducing positive trial functions $h_0$ and $h_1$ (for details see \cite{Tix}) we have
\[
 \langle f,b_0f\rangle \geq \underset{p\in[0,\infty)}{\text{inf}}\mathcal{E}(p)\cdot ||f||^2, 
\]
where
\[
 \mathcal{E}(p):= e(p)-\frac{\delta_c}{2\pi}\left(\left(1+\frac{1}{e(p)}\right)\frac{I_0(p)}{h_0(p)}+\left(1-\frac{1}{e(p)}\right)\frac{I_1(p)}{h_1(p)}\right)
\]
and
\[
 I_k(p):=\int^\infty_0 h_k(p')Q_{k-1/2}\left(\frac{1}{2}\left(\frac{p}{p'}+\frac{p'}{p}\right)\right)dp'.
\]
It remains to choose $h_0$ and $h_1$ such that this infimum is $1-2\delta_c$. Note that choosing $h_0(p)=h_1(p)=\frac{1}{p}$ as Evans et al. did in \cite{Eps} would result in the infimum being finite, i.e. boundedness of the operator from below, but negative. 
Similarly to Tix we look for $k=0,1$ at the functions $g_k(re^{i\theta}):=r^{a}e^{-r}e^{ik\theta}$with $a>-2$, so that the $g_k$ are in $L^1\left(\mathbb{R}^2\right)$. Let the Fourier transform on $\mathbb{R}^2$ be defined as usual by
\[
 \widehat{f}(\mathbf{p}):=\frac{1}{2\pi}\int_{\mathbb{R}^2}e^{-i\mathbf{p}\cdot\mathbf{x}}f(\mathbf{x})\;d\mathbf{x}.
\]
It is a well known theorem \cite{Steinweiss}*{Theorem IV.1.6} that for $f\in L^1\left(\mathbb{R}^2\right)$ of the form $f(re^{i\theta})=f_0(r)e^{ik\theta}$ the Fourier transform is again of this form, more precisely:
\[
 \widehat{f}(pe^{ik\phi})=e^{ik\phi}(-i)^k\int_0^\infty J_{k}(pr)f_0(r)r\;dr,
\]
where the $J_k$ are Bessel functions of the first order. With \cite{Gradshteyn}*{6.621.1} it follows that
\begin{equation}\label{E}
 \widehat{g_k}(pe^{i\phi})=e^{ik\phi}(-i)^k(p^2+1)^{-\frac{a+2}{2}}\Gamma(k+a+2)P^{-k}_{a+1}\left((p^2+1)^{-\frac{1}{2}}\right).
\end{equation}
Let now $a=-\frac{1}{2}$.
From the integral representation \cite{Gradshteyn}*{8.714.1}
\[
 P_\nu^\mu(\cos\phi)=\sqrt{\frac{2}{\pi}}\frac{\sin^\mu(\phi)}{\Gamma\left(\frac{1}{2}-\mu\right)}\int_0^\phi\frac{\cos\left(\left(\nu+\frac{1}{2}\right)t\right)\;dt}{(\cos t-\cos\phi)^{\mu+\frac{1}{2}}}, \hspace{2ex} 0<\phi<\pi,\hspace{1ex} \mu<\frac{1}{2}
\]
and the fact that the gamma function is positive for positive argument it follows that the radial part of $g_k$, i.e. 
\[
 f_k(p):=(p^2+1)^{-\frac{3}{4}}\Gamma\left(k+\frac{3}{2}\right)P^{-k}_{-1/2}\left((p^2+1)^{\frac{1}{2}}\right)
\]
is positive. Now we calculate the Fourier transform of $\frac{1}{|x|}g_k$ in two different ways. On the one hand
\[
 \left(\frac{1}{|\cdot|}\cdot g_k\right)^{\widehat{}}(pe^{i\phi}) = \frac{1}{2\pi}\left(\widehat{\frac{1}{|\cdot|}}*\widehat{g_k}\right)(pe^{i\phi}) =  \frac{1}{2\pi}\int_{\mathbb{R}^2}\frac{\widehat{g_k}(\mathbf{p}')}{|pe^{i\phi}-\mathbf{p}'|}\,d\mathbf{p'} = 
\]
\[
=\frac{1}{2\pi}e^{ik\phi}(-i)^k\int_0^\infty\int_0^{2\pi}p'\frac{e^{ik\phi'}f_k(p')}{\sqrt{p^2+p'^2-2pp'cos\phi'}}\;d\phi'dp'
\]
\[
= \frac{1}{\pi \sqrt{p}}e^{ik\phi}(-i)^k\int_0^\infty \sqrt{p'}f_k(p')Q_{k-1/2}\left(\frac{1}{2}\left(\frac{p}{p'}+\frac{p'}{p}\right)\right)\;dp', 
\]
where the last identity follows from Lemma 2.2 of \cite{Bouzouina}. On the other hand, setting $a=-\frac{3}{2}$ instead of $a=-\frac{1}{2}$ in (\ref{E}), we get
\[
 \left(\frac{1}{|\cdot|}\cdot g_k\right)^{\widehat{}}(pe^{i\phi}) = e^{ik\phi}(-i)^k(p^2+1)^{-\frac{1}{4}}\Gamma\left(k+\frac{1}{2}\right)P^{-k}_{-1/2}\left((p^2+1)^{-\frac{1}{2}}\right).
\]
Now let
\[
 h_k(p):=\sqrt{p}(p^2+1)^{-\frac{3}{4}}\Gamma\left(k+\frac{3}{2}\right)P^{-k}_{1/2}\left((p^2+1)^{-\frac{1}{2}}\right),
\]
which is positive. Then 
\[
 I_k(p) = \pi\sqrt{p}(p^2+1)^{-\frac{1}{4}}\Gamma\left(k+\frac{1}{2}\right)P^{-k}_{-1/2}\left((p^2+1)^{-\frac{1}{2}}\right).
\]
Therefore
\[
 \mathcal{E}(p)=e(p)-\delta_c\left((e(p)+1)\frac{P_{-1/2}\left(\frac{1}{e(p)}\right)}{P_{1/2}\left(\frac{1}{e(p)}\right)}+(e(p)-1)\cdot\frac{1}{3}\cdot\frac{P^{-1}_{-1/2}\left(\frac{1}{e(p)}\right)}{P^{-1}_{1/2}\left(\frac{1}{e(p)}\right)}\right)
\]
and 
\[
 \inf_{p\in[0,\infty)}\mathcal{E}(p) = \inf_{x\in(0,1]}f(x),
\]
where
\[
 f(x):=\left( \frac{1}{x}-\delta_c\left(\left(\frac{1}{x}+1\right)\frac{P_{-1/2}(x)}{P_{1/2}(x)}+\left(\frac{1}{x}-1\right)\cdot\frac{1}{3}\cdot\frac{P^{-1}_{-1/2}(x)}{P^{-1}_{1/2}(x)}\right)\right).
\]

Now we proceed to show that $\inf_{x\in(0,1]}f(x) = 1-2\delta_c$. According to \cite{Gradshteyn}*{8.704} we have 
\[
 P^\mu_\nu(x)=\frac{1}{\Gamma(1-\mu)}\left(\frac{1+x}{1-x}\right)^{\frac{\mu}{2}}F\left(-\nu,\nu+1;1-\mu;\frac{1-x}{2}\right) 
\]
and therefore
\[
 f(x)=\left( \frac{1}{x}-\delta_c\left(\left(\frac{1}{x}+1\right)\frac{F\left(\frac{1}{2},\frac{1}{2};1;\frac{1-x}{2}\right)}{F\left(-\frac{1}{2},\frac{3}{2};1;\frac{1-x}{2}\right)}+\left(\frac{1}{x}-1\right)\cdot\frac{1}{3}\cdot\frac{F\left(\frac{1}{2},\frac{1}{2};2;\frac{1-x}{2}\right)}{F\left(-\frac{1}{2},\frac{3}{2};2;\frac{1-x}{2}\right)}\right)\right).
\]
By definition
\[
 F(a,b;c;x)=\sum_{k=0}^{\infty}\frac{(a)_k(b)_k}{(c)_kk!}x^k
\]
for $|x|<1$, where
\[
 (a)_k:=a(a+1)\cdot\cdot\cdot(a+n-1), \hspace{3ex} (a)_0:=1.
\]
So we have for $0<x\leq 1$
\[
 F\left(-\nu,\nu+1;1-\mu;\frac{1-x}{2}\right)=\sum_{k=0}^{\infty}\frac{(-\nu)_k(\nu+1)_k}{(1-\mu)_kk!2^k}(1-x)^k.
\]
We see at once, that $f(1)=1-2\delta_c$. Examining the series above for $\nu=\pm\frac{1}{2}$ and $\mu=0,-1$ we notice that for $\nu=-\frac{1}{2}$ all terms are positive, whereas for $\nu=\frac{1}{2}$ all terms except the first one are negative. Furthermore we can estimate
\[
 \left|\sum_{k=3}^{\infty}\frac{(-\nu)_k(\nu+1)_k}{(1-\mu)_kk!2^k}(1-x)^k\right|\leq\left|\frac{(-\nu)_3(\nu+1)_3}{(1-\mu)_33!2^3}\right|(1-x)^3\cdot2\leq \left|\frac{(-\nu)_2(\nu+1)_2}{(1-\mu)_22!2^2}\right|(1-x)^2.
\]
So we have
\[
 f(x)-(1-2\delta_c)\geq\frac{1}{x}-\delta_c\left(\frac{1}{x}+1\right)\frac{1+\frac{1}{8}(1-x)+\frac{9}{128}(1-x)^2}{1-\frac{3}{8}(1-x)-\frac{15}{128}(1-x)^2}-
\]
\[
-\left(\frac{1}{x}-1\right)\cdot\frac{1}{3}\cdot\frac{1+\frac{1}{16}(1-x)+\frac{3}{128}(1-x)^2}{1-\frac{3}{16}(1-x)-\frac{5}{128}(1-x)^2}-(1-2\delta_c).
\]
After some calculation this equals
\[
 ((49152-114688\delta_c)(1-x)+(-27648+48128\delta_c)(1-x)^2+(-4224+16128\delta_c)(1-x)^3+
\]
\[
+(1800-3504\delta_c)(1-x)^4+(225-540\delta_c)(1-x)^5)/
\]
\[
(3x(128-48(1-x)-15(1-x)^2)(128-24(1-x)-5(1-x)^2)).
\]
The denominator is obviously positive and so are $-4224+16128\delta_c$, $1800-3504\delta_c$ and $225-540\delta_c$. Thus the term above is positive, if 
\[
 (49152-114688\delta_c)+(-27648+48128\delta_c)(1-x)\geq0
\]
and this is the case if $x\geq0.4$. Therefore $f(x)\geq(1-2\delta_c)$ for $x\geq0.4$.

To show that $f(x)\geq(1-2\delta_c)$ for $x\leq0.4$ we need the series expansion at $0$ \cite{Gradshteyn}*{8.775}:
\[
  P^\mu_\nu(x)=\frac{2^\mu \cos\left(\frac{1}{2}(\nu+\mu)\pi\right)\Gamma\left(\frac{\nu+\mu+1}{2}\right)}{\sqrt{\pi}\Gamma\left(\frac{\nu-\mu}{2}+1\right)}(1-x^2)^{\frac{\mu}{2}}\sum_{k=0}^\infty \frac{\left(\frac{\nu+\mu+1}{2}\right)_k\left(\frac{\nu-\mu}{2}\right)_k}{\left(\frac{1}{2}\right)_k k!}x^{2k}+
\]
\[
+\frac{2^{\mu+1} \sin\left(\frac{1}{2}(\nu+\mu)\pi\right)\Gamma\left(\frac{\nu+\mu}{2}+1\right)}{\sqrt{\pi}\Gamma\left(\frac{\nu-\mu+1}{2}\right)}x(1-x^2)^{\frac{\mu}{2}}\sum_{k=0}^\infty \frac{\left(\frac{\nu+\mu}{2}+1\right)_k\left(\frac{-\nu+\mu+1}{2}\right)_k}{\left(\frac{3}{2}\right)_k k!}x^{2k}
\]
for $0\leq x<1$. So we have
\[
 P_{-1/2}(x)=\frac{1}{2\pi}\sum_{k=0}^\infty\left( \frac{\Gamma\left(\frac{1}{4}+k\right)^2}{\Gamma\left(\frac{1}{2}+k\right)k!}x^{2k}-\frac{\Gamma\left(\frac{3}{4}+k\right)^2}{\Gamma\left(\frac{3}{2}+k\right)k!}x^{2k+1}\right)
\]
\[
 P_{1/2}(x)=\frac{1}{2\pi}\sum_{k=0}^\infty\left(-\frac{\Gamma\left(\frac{3}{4}+k\right)\Gamma\left(-\frac{1}{4}+k\right)}{\Gamma\left(\frac{1}{2}+k\right)k!}x^{2k}+\frac{\Gamma\left(\frac{1}{4}+k\right)\Gamma\left(\frac{5}{4}+k\right)}{\Gamma\left(\frac{3}{2}+k\right)k!}x^{2k+1}\right)
\]
\[
 \sqrt{1-x^2}P_{-1/2}^{-1}(x)=\frac{1}{4\pi}\sum_{k=0}^\infty\left(\frac{\Gamma\left(-\frac{1}{4}+k\right)^2}{\Gamma\left(\frac{1}{2}+k\right)k!}x^{2k}-\frac{\Gamma\left(\frac{1}{4}+k\right)^2}{\Gamma\left(\frac{3}{2}+k\right)k!}x^{2k+1}\right)
\]
\[
 \sqrt{1-x^2}P_{1/2}^{-1}(x)=\frac{1}{4\pi}\sum_{k=0}^\infty\left(-\frac{\Gamma\left(\frac{1}{4}+k\right)\Gamma\left(-\frac{3}{4}+k\right)}{\Gamma\left(\frac{1}{2}+k\right)k!}x^{2k}+\frac{\Gamma\left(\frac{3}{4}+k\right)\Gamma\left(-\frac{1}{4}+k\right)}{\Gamma\left(\frac{3}{2}+k\right)k!}x^{2k+1}\right)
\]
To arrive at an estimate we first show that the function mapping $x$ to $\frac{\Gamma(x)}{\Gamma\left(x+\frac{1}{2}\right)}$ is strictly decreasing on $(0,\infty)$. For $x>0$ 
\[
 \frac{d}{dx}\frac{\Gamma(x)}{\Gamma\left(x+\frac{1}{2}\right)}=\frac{\Gamma'(x)\Gamma\left(x+\frac{1}{2}\right)-\Gamma(x)\Gamma'\left(x+\frac{1}{2}\right)}{\Gamma\left(x+\frac{1}{2}\right)^2}=\frac{\Gamma(x)\psi(x)-\Gamma(x)\psi\left(x+\frac{1}{2}\right)}{\Gamma\left(x+\frac{1}{2}\right)},
\]
where 
\[
 \psi(x):=\frac{\Gamma'(x)}{\Gamma(x)}.
\]
According to \cite{Gradshteyn}*{8.361}
\[
 \psi(x)=\int_0^\infty\left(\frac{e^{-t}}{t}-\frac{e^{-xt}}{1-e^{-t}}\right)dt
\]
and therefore $\psi$ is strictly increasing and $\frac{\Gamma(x)}{\Gamma\left(x+\frac{1}{2}\right)}$ is strictly decreasing on $(0,\infty)$. If we denote the nth coefficient of the respective power series by $a_n$, we get in each of the four cases for $k\geq 1$
\[
 \left|\frac{a_{2k}}{a_{2k+1}}\right|\geq\frac{\Gamma\left(\frac{3}{4}+k\right)\Gamma\left(\frac{1}{4}+k\right)\Gamma\left(\frac{3}{2}+k\right)}{\Gamma\left(\frac{5}{4}+k\right)\Gamma\left(\frac{3}{4}+k\right)\Gamma\left(\frac{1}{2}+k\right)}=\frac{k+\frac{1}{2}}{k+\frac{1}{4}}>1
\]
and
\[
 \left|\frac{a_{2k+1}}{a_{2k+2}}\right|\geq\frac{\Gamma\left(\frac{5}{4}+k\right)\Gamma\left(\frac{3}{4}+k\right)(k+1)!}{\Gamma\left(\frac{7}{4}+k\right)\Gamma\left(\frac{5}{4}+k\right)k!}=\frac{k+1}{k+\frac{3}{4}}>1.
\]
So starting from $a_2$, the coefficients alternate and their absolute values decrease strictly. Therefore
\[
 P_{-1/2}(x)\leq \frac{1}{2\pi}\left(\frac{\Gamma\left(\frac{1}{4}\right)^2}{\sqrt{\pi}}-\frac{4\pi^{\frac{3}{2}}}{\Gamma\left(\frac{1}{4}\right)^2}x+\frac{\Gamma\left(\frac{1}{4}\right)^2}{8\sqrt{\pi}}x^2\right)
\]
\[
 P_{1/2}(x)\geq \frac{1}{2\pi}\left(\frac{8\pi^{\frac{3}{2}}}{\Gamma\left(\frac{1}{4}\right)^2}+\frac{\Gamma\left(\frac{1}{4}\right)^2}{2\sqrt{\pi}}x-\frac{3\pi^{\frac{3}{2}}}{\Gamma\left(\frac{1}{4}\right)^2}x^2\right)
\]
\[
 \sqrt{1-x^2}P_{-1/2}^{-1}(x)\leq \frac{1}{4\pi}\left(\frac{32\pi^{\frac{3}{2}}}{\Gamma\left(\frac{1}{4}\right)^2}-\frac{2\Gamma\left(\frac{1}{4}\right)^2}{\sqrt{\pi}}x+\frac{4\pi^{\frac{3}{2}}}{\Gamma\left(\frac{1}{4}\right)^2}x^2\right)
\] 
\[
 \sqrt{1-x^2}P_{1/2}^{-1}(x)\geq \frac{1}{4\pi}\left(\frac{4\Gamma\left(\frac{1}{4}\right)^2}{3\sqrt{\pi}}-\frac{16\pi^{\frac{3}{2}}}{\Gamma\left(\frac{1}{4}\right)^2}x-\frac{\Gamma\left(\frac{1}{4}\right)^2}{2\sqrt{\pi}}x^2\right).
\]
One can easily check that each of the right sides is positive for $x\leq0.4$. Plugging this in and cancelling we get
\[
 f(x)-(1-2\delta_c)\geq\frac{1}{x}-\left(\frac{\Gamma(\tfrac{1}{4})^4}{8\pi^2}+\frac{8\pi^2}{\Gamma(\tfrac{1}{4})^4}\right)^{-1}\left(\left(\frac{1}{x}+1\right)\frac{\Gamma\left(\frac{1}{4}\right)^4-4\pi^2x+\tfrac{1}{8}\Gamma\left(\frac{1}{4}\right)^4x^2}{8\pi^2+\tfrac{1}{2}\Gamma\left(\frac{1}{4}\right)^4x-3\pi^2x^2}-\right.
\]
\[
\left.-\left(\frac{1}{x}-1\right)\frac{8\pi^2-\tfrac{1}{2}\Gamma\left(\frac{1}{4}\right)^4x+\pi^2x^2}{\Gamma\left(\frac{1}{4}\right)^4-12\pi^2x-\tfrac{3}{8}\Gamma\left(\frac{1}{4}\right)^4x^2}\right)-\left(1-2\left(\frac{\Gamma(\tfrac{1}{4})^4}{8\pi^2}+\frac{8\pi^2}{\Gamma(\tfrac{1}{4})^4}\right)^{-1}\right).
\]
The right side equals
\[
\frac{\sum_{n=0}^4b_nx^n}{\left(\Gamma\left(\frac{1}{4}\right)^8+64\pi^4\right)\left(8\Gamma\left(\frac{1}{4}\right)^4-96\pi^2x-3\Gamma\left(\frac{1}{4}\right)^4x^2\right)\left(16\pi^2+\Gamma\left(\frac{1}{4}\right)^4x-6\pi^2x^2\right)},
\]
where
\[
 b_0 = 8\; \Gamma\left(\frac{1}{4}\right)^{16} - 256\; \Gamma\left(\frac{1}{4}\right)^{12} \pi^2 + 3072\; \Gamma\left(\frac{1}{4}\right)^8 \pi^4 - 98304 \pi^8
\]
\[
 b_1 = -8\; \Gamma\left(\frac{1}{4}\right)^{16} + 3072 \;\Gamma\left(\frac{1}{4}\right)^8 \pi^4 - 40960 \;\Gamma\left(\frac{1}{4}\right)^4 \pi^6 + 98304 \pi^8
\]
\[
 b_2 = -3 \; \Gamma\left(\frac{1}{4}\right)^{16} + 192\; \Gamma\left(\frac{1}{4}\right)^{12} \pi^2 - 2944 \; \Gamma\left(\frac{1}{4}\right)^8 \pi^4 + 4096 \;\Gamma\left(\frac{1}{4}\right)^4 \pi^6 + 
 36864 \pi^8
\]
\[
 b_3 = 3\; \Gamma\left(\frac{1}{4}\right)^{16} - 24\; \Gamma\left(\frac{1}{4}\right)^{12} \pi^2 - 128\; \Gamma\left(\frac{1}{4}\right)^8 \pi^4 + 10752\;\Gamma\left(\frac{1}{4}\right)^4 \pi^6 - 
 36864 \pi^8
\]
\[
 b_4 = -12\; \Gamma\left(\frac{1}{4}\right)^{12} \pi^2 + 288\; \Gamma\left(\frac{1}{4}\right)^8 \pi^4 - 1536\;\Gamma\left(\frac{1}{4}\right)^4 \pi^6
\]
One easily verifies that in the fraction above the denominator is positive as well as $b_0$ and $b_3$, while $b_1$, $b_2$ and $b_4$ are negative, so
\[
 \sum_{n=0}^4b_nx^n\geq b_0-b_1\cdot 0.4 -b_2\cdot 0.4^2-b_4\cdot 0.4^4>0.
\]   
\end{proof}
\section{Unboundedness above the critical constant}
The following is inspired by \cite{Eps}.
\begin{lemma}\label{F}
 Let $\nu=\pm\frac{1}{2}$ and
\[
 I_{\nu}=\int^\infty_0\frac{1}{p}\;Q_{\nu}\left(\frac{1}{2}\left(p+\frac{1}{p}\right)\right)dp.
\]
There exists $C\in\mathbb{R}$ such that for all $1<a<b$
\[
 \int^b_a\int^b_a\frac{1}{p}\;\frac{1}{p'}\;Q_{\nu}\left(\frac{1}{2}\left(\frac{p}{p'}+\frac{p'}{p}\right)\right)\;dp'dp\geq I_\nu \cdot\log\left(\frac{a}{b}\right)+C
\]
\end{lemma}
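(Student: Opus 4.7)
The plan is to pass to logarithmic coordinates, where the kernel becomes translation-invariant. Substituting $p = e^u$, $p' = e^v$ turns $\tfrac{dp\,dp'}{pp'}$ into $du\,dv$ and $\tfrac{1}{2}(p/p'+p'/p)$ into $\cosh(u-v)$. Writing $A := \log a$, $B := \log b$, $L := B - A = \log(b/a) > 0$, the double integral becomes
\[
 J(L) := \int_A^B\!\int_A^B Q_\nu\bigl(\cosh(u-v)\bigr)\,du\,dv.
\]
A further change of variables $w = u-v$ (with $v$ fixed) and Fubini then collapse this: for each $w \in [-L, L]$ the admissible range of $v$ has length $L - |w|$, so using the symmetry $\cosh(-w) = \cosh w$,
\[
 J(L) = \int_{-L}^L (L - |w|)\,Q_\nu(\cosh w)\,dw = 2L\int_0^L Q_\nu(\cosh w)\,dw - 2\int_0^L w\,Q_\nu(\cosh w)\,dw.
\]

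Applying the same substitution to the definition of $I_\nu$ gives $I_\nu = 2\int_0^\infty Q_\nu(\cosh w)\,dw$, so we trade the truncated integral for a tail:
\[
 J(L) = L\,I_\nu \;-\; 2L\int_L^\infty Q_\nu(\cosh w)\,dw \;-\; 2\int_0^L w\,Q_\nu(\cosh w)\,dw.
\]
The function $w \mapsto w\,Q_\nu(\cosh w)$ is integrable on $[0,\infty)$: near $w=0$ the logarithmic singularity $Q_\nu(\cosh w) = O(\log(1/w))$ makes the integrand $O\bigl(w \log(1/w)\bigr)$, and for $\nu = \pm\tfrac{1}{2}$ the asymptotics $Q_\nu(x) = O(x^{-\nu - 1})$ together with $\cosh w \sim \tfrac{1}{2}e^w$ yield exponential decay at infinity. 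Setting $C_1 := \int_0^\infty w\,Q_\nu(\cosh w)\,dw < \infty$, one has $\int_0^L w\,Q_\nu(\cosh w)\,dw \leq C_1$, and the same constant bounds the tail via $L\int_L^\infty Q_\nu(\cosh w)\,dw \leq \int_L^\infty w\,Q_\nu(\cosh w)\,dw \leq C_1$.

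Combining these bounds yields $J(L) \geq L\,I_\nu - 4 C_1$. Since $L I_\nu \geq -L I_\nu$ for $L\geq 0$, this implies in particular
\[
 J(L) \geq -L\,I_\nu - 4 C_1 = I_\nu \log(a/b) + C
\]
with $C := -4 C_1$, which is the claimed inequality (in fact a considerably stronger bound $J(L) \geq I_\nu \log(b/a) - 4 C_1$ has been established, matching the leading behaviour $J(L) \sim I_\nu \log(b/a)$ as $b/a \to \infty$). I expect no serious obstacle; the only genuinely technical point is the endpoint integrability of $w\,Q_\nu(\cosh w)$ for $\nu = \pm\tfrac12$, which is a routine check from the standard expansions of $Q_\nu$ near $x=1$ and near $x=\infty$.
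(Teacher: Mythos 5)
Your proposal is correct, and it takes a genuinely different route from the paper. The paper works directly in the original variables and uses the decomposition
\[
\int_a^b\!\int_a^b \;=\; \int_a^b\!\int_0^\infty \;-\; \int_a^b\!\int_0^a \;-\; \int_a^b\!\int_b^\infty,
\]
computes the first (full) piece exactly as $I_\nu\log(b/a)$ via the scaling $p'\mapsto p'/p$, and bounds each of the two remaining ``wings'' by $I_\nu$ from above, obtaining the explicit constant $C=-2I_\nu$. The technical work there is a monotonicity claim: the map $p\mapsto Q_\nu\bigl(\tfrac12(p+\tfrac1p)\bigr)p^\alpha$ is increasing on $(0,1)$ for $\alpha\geq-\nu-1$, proved via the Legendre differential identity and the integral representation of $Q_\nu$. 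You instead pass to logarithmic coordinates, where the kernel becomes a translation-invariant convolution kernel $Q_\nu(\cosh(u-v))$, use the triangle-kernel identity $J(L)=\int_{-L}^L(L-|w|)Q_\nu(\cosh w)\,dw$, and reduce everything to the finiteness of $\int_0^\infty w\,Q_\nu(\cosh w)\,dw$, which follows from the standard endpoint asymptotics of $Q_\nu$. Your argument makes the leading asymptotics $J(L)\sim I_\nu L$ transparent (it is simply the $L^1$ normalization of the convolution kernel), at the cost of a less explicit constant $C=-4C_1$; the paper's approach buys the clean value $C=-2I_\nu$ but hides the convolution structure behind a pointwise monotonicity estimate. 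You also correctly flag that the inequality as written, with $\log(a/b)$, is trivially weak and almost certainly a typo for $\log(b/a)$; your proof (like the paper's) in fact establishes the stronger bound with $\log(b/a)$, which is the one actually invoked in the unboundedness argument.
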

\begin{proof}
 According to \cite{Whittaker}*{p.334} for $t>1$ and $\nu>-1$ we have the integral representation
\[
 Q_{\nu}(t)=\int^{t-\sqrt{t^2-1}}_0\frac{x^{\nu}}{\sqrt{x^2-2tx+1}}\;dx\leq (t-\sqrt{t^2-1}) \int^{t-\sqrt{t^2-1}}_0\frac{x^{\nu-1}}{\sqrt{x^2-2tx+1}}\;dx.
\]
If $0<p<1$ and $t=\frac{1}{2}\left(p+\frac{1}{p}\right)$, then $t-\sqrt{t^2-1}=p$; we conclude that
\[
 Q_{\nu+1}\left(\frac{1}{2}\left(p+\frac{1}{p}\right)\right)\leq p\;Q_{\nu}\left(\frac{1}{2}\left(p+\frac{1}{p}\right)\right).
\]
According to \cite{Gradshteyn}*{8.832.3} the derivative of the Legendre functions is given on $(1,\infty)$ by
\[
 \frac{d}{dx}Q_{\nu}(x) = \frac{(\nu+1)(Q_{\nu+1}(x)-x\;Q_\nu(x))}{x^2-1}.
\]
Let $\alpha\in\mathbb{R}$ and $0<p<1$:
\[
\frac{d}{dp}\left\{Q_{\nu}\left(\frac{1}{2}\left(p+\frac{1}{p}\right)\right)\cdot p^\alpha\right\}
\]
\[
=\frac{(\nu+1)\left(Q_{\nu+1}\left(\frac{1}{2}\left(p+\frac{1}{p}\right)\right)-\frac{1}{2}\left(p+\frac{1}{p}\right)Q_{\nu}\left(\frac{1}{2}\left(p+\frac{1}{p}\right)\right)\right)\cdot \frac{1}{2}\left(1-\frac{1}{p^2}\right)}{\frac{1}{4}\left(p^2+\frac{1}{p^2}\right)-\frac{1}{2}}p^\alpha
\]
\[
+\alpha\; Q_{\nu}\left(\frac{1}{2}\left(p+\frac{1}{p}\right)\right)p^{\alpha-1}
\]
\[
=\frac{2(\nu+1)\left(\frac{1}{2}\left(p+\frac{1}{p}\right)Q_{\nu}\left(\frac{1}{2}\left(p+\frac{1}{p}\right)\right)-Q_{\nu+1}\left(\frac{1}{2}\left(p+\frac{1}{p}\right)\right)\right)\cdot\left(\frac{1}{p^2}-1\right)}{\left(\frac{1}{p}-p\right)^2}p^\alpha 
\]
\[
+ \alpha\; Q_{\nu}\left(\frac{1}{2}\left(p+\frac{1}{p}\right)\right)p^{\alpha-1}
\]
\[
\geq\frac{2(\nu+1)\left(\frac{1}{2}\left(p+\frac{1}{p}\right)Q_{\nu}\left(\frac{1}{2}\left(p+\frac{1}{p}\right)\right)-p\;Q_{\nu}\left(\frac{1}{2}\left(p+\frac{1}{p}\right)\right)\right)}{\frac{1}{p}-p}p^{\alpha-1}
\]
\[
+ \alpha\; Q_{\nu}\left(\frac{1}{2}\left(p+\frac{1}{p}\right)\right)p^{\alpha-1}
\]
\[
=(\nu+1+\alpha)\;Q_{\nu}\left(\frac{1}{2}\left(p+\frac{1}{p}\right)\right)p^{\alpha-1}.
\]
Therefore the function
\[
 p\mapsto Q_{\nu}\left(\frac{1}{2}\left(p+\frac{1}{p}\right)\right)\cdot p^\alpha
\]
is increasing on $(0,1)$, if $\alpha\geq-\nu-1$, which is satisfied by $\nu=\pm\frac{1}{2}$ and $\alpha=-\frac{1}{2}$. Thus
\[
\int^b_a\frac{1}{p}\;\int^a_0\frac{1}{p'}\;Q_{\nu}\left(\frac{1}{2}\left(\frac{p}{p'}+\frac{p'}{p}\right)\right)\;dp'dp=\int^b_a\frac{1}{p}\;\int^{\tfrac{a}{p}}_0\frac{1}{p'}\;Q_{\nu}\left(\frac{1}{2}\left(p'+\frac{1}{p'}\right)\right)\;dp'dp
\]
\[
\leq\int^b_a\frac{1}{p}\;\sqrt{\frac{p}{a}}\;Q_{\nu}\left(\frac{1}{2}\left(\frac{a}{p}+\frac{p}{a}\right)\right)\int^{\tfrac{a}{p}}_0\frac{1}{\sqrt{p'}}dp'dp=2\int^{\tfrac{b}{a}}_1\frac{1}{p}\;Q_{\nu}\left(\frac{1}{2}\left(p+\frac{1}{p}\right)\right)dp\leq I_\nu
\]
and
\[
 \int^b_a\frac{1}{p}\;\int^\infty_b\frac{1}{p'}\;Q_{\nu}\left(\frac{1}{2}\left(\frac{p}{p'}+\frac{p'}{p}\right)\right)\;dp'dp=\int^b_a\frac{1}{p}\;\int^{\tfrac{p}{b}}_0\frac{1}{p'}\;Q_{\nu}\left(\frac{1}{2}\left(p'+\frac{1}{p'}\right)\right)\;dp'dp
\]
\[
\leq2\int^b_a\frac{1}{p}\;Q_{\nu}\left(\frac{1}{2}\left(\frac{p}{b}+\frac{b}{p}\right)\right)\;dp=2\int^{\tfrac{b}{a}}_1\frac{1}{p}\;Q_{\nu}\left(\frac{1}{2}\left(p+\frac{1}{p}\right)\right)dp\leq I_\nu.
\]
From this the assertion of the lemma follows with $C=-2I_\nu$.
\end{proof}
\begin{theorem}
 If $\delta>\delta_c$, $b_0$ is unbounded from below.
\end{theorem}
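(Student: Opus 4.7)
The plan is to construct, for each $\delta>\delta_c$, a sequence of trial functions on which $\langle\,\cdot\,,b_0\,\cdot\,\rangle$ becomes arbitrarily negative. Guided by the structure of Lemma~\ref{F}, I take $f_{a,b}(p):=\chi_{[a,b]}(p)/\sqrt{p\,e(p)}$ with $1\ll a<b$; this lies in the form domain and the kinetic part is exactly $\int_a^b dp/p=\log(b/a)$. For the Coulomb part, on $[a,b]^2$ as $a\to\infty$ both $\beta_1$ and $\beta_2$ tend to $1/2$ and $pe(p)$ to $p^2$ uniformly, so
\[
\frac{K_0(p,p')}{\sqrt{p\,e(p)\,p'\,e(p')}}\ge\frac{1-o(1)}{2pp'}\bigl(Q_{-1/2}+Q_{1/2}\bigr)\bigl(\tfrac12(p/p'+p'/p)\bigr).
\]
Applying Lemma~\ref{F} to each Legendre piece (whose conclusion should read $\log(b/a)$ rather than $\log(a/b)$, as the proof actually gives) yields
\[
\langle f_{a,b},b_0 f_{a,b}\rangle\le\Bigl(1-\tfrac{\delta(1-o(1))(I_{-1/2}+I_{1/2})}{2\pi}\Bigr)\log(b/a)+O(1).
\]

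To close the argument I must verify $(I_{-1/2}+I_{1/2})/(2\pi)=\delta_c^{-1}$. I would recycle the identity from the positivity section,
\[
\int_0^\infty h_k(p')Q_{k-1/2}\bigl(\tfrac12(p/p'+p'/p)\bigr)\,dp'=\pi\sqrt p\,(p^2+1)^{-1/4}\Gamma(k+\tfrac12)P^{-k}_{-1/2}(1/e(p)),
\]
valid for $k=0,1$ with the same $h_k$. Rescaling $p'=pu$ turns the left side into $\int_0^\infty p\,h_k(pu)Q_{k-1/2}(\tfrac12(u+1/u))\,du$; as $p\to\infty$, $p\,h_k(pu)\to u^{-1}\Gamma(k+\tfrac32)P^{-k}_{1/2}(0)$ pointwise while the right side converges to $\pi\Gamma(k+\tfrac12)P^{-k}_{-1/2}(0)$. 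Passing to the limit (dominated convergence on the region $u\ge 1/p$ with majorant $C/u$, plus a direct $O(p^{-k-1/2})$ bound on the shrinking region $u<1/p$ using $h_k(q)\sim q^{k+1/2}$ near the origin) gives
\[
I_{k-1/2}=\frac{2\pi}{2k+1}\cdot\frac{P^{-k}_{-1/2}(0)}{P^{-k}_{1/2}(0)},
\]
and substituting the Legendre values at $0$ (from the power series already recorded in the positivity proof together with $\Gamma(3/4)=\pi\sqrt2/\Gamma(1/4)$) reproduces the two terms of $\delta_c^{-1}$ exactly.

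For $\delta>\delta_c$, pick $\varepsilon>0$ with $\delta(1-\varepsilon)>\delta_c$, choose $a$ large enough that the $o(1)$ errors fall below $\varepsilon$, and let $b/a\to\infty$; the coefficient of $\log(b/a)$ is then bounded away from $0$ below, forcing $\langle f_{a,b},b_0 f_{a,b}\rangle\to-\infty$. The main obstacle I anticipate is the scaling-limit identification step: while the trial-function construction is a transparent adaptation of the three-dimensional argument, matching the resulting constant to the particular expression $\Gamma(1/4)^4/(8\pi^2)+8\pi^2/\Gamma(1/4)^4$ requires the careful passage to the limit above, controlled by the precise behaviour of $h_k(q)$ at both $q=0$ (where $h_k\sim q^{k+1/2}$) and $q=\infty$ (where $h_k\sim q^{-1}$).
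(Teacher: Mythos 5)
Your proposal follows the same overall strategy as the paper: a trial function concentrated on $(a,b)$ with profile $\sim p^{-1}$, the bound of Lemma~\ref{F} applied to each Legendre term, and the identity $(I_{-1/2}+I_{1/2})/(2\pi)=\delta_c^{-1}$, so that for $\delta>\delta_c$ the coefficient of $\log(b/a)$ is eventually negative while $\|f_{a,b}\|$ stays bounded. The choice $f_{a,b}=\chi_{[a,b]}/\sqrt{p\,e(p)}$ rather than $\chi_{(a,b)}/p$ makes the kinetic integral exactly $\log(b/a)$ but is otherwise equivalent to the paper's, and you correctly note the sign misprint ($\log(a/b)$ versus $\log(b/a)$) in the statement of Lemma~\ref{F}. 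The one genuine departure is in how you obtain $I_{-1/2}=\Gamma(1/4)^4/(4\pi)$ and $I_{1/2}=16\pi^3/\Gamma(1/4)^4$: the paper simply cites \cite{Bouzouina}, whereas you derive them by a $p\to\infty$ scaling limit of the Fourier identity $I_k(p)=\pi\sqrt{p}\,(p^2+1)^{-1/4}\Gamma(k+\tfrac12)P^{-k}_{-1/2}(1/e(p))$ from the positivity section. That limit passage is sound as sketched (the majorant $C/u$ on $u\ge1/p$ from $h_k(q)\le C/q$ on $[1,\infty)$, plus the $O(p^{-k-1/2})$ tail estimate on $u<1/p$), and substituting $P^{-k}_{\pm1/2}(0)$ together with $\Gamma(3/4)=\pi\sqrt{2}/\Gamma(1/4)$ does reproduce $\delta_c^{-1}$ exactly. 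This makes the constants self-contained and exhibits the tight link between the positivity trial functions $h_k$ and the divergence mechanism, at the cost of a longer argument than the paper's one-line citation.
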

\begin{proof}
 Let $1<a<b$ and define for $p>0$
\[
 f(p):=\chi_{(a,b)}(p)\cdot \frac{1}{p}.
\]
We have for $p,p'\in(a,b)$
\[
 \beta_1(p,p')\geq\frac{1}{2}
\]
and
\[
 \beta_2(p,p')=\frac{1}{2}\sqrt{1-\frac{1}{e(p)}}\sqrt{1-\frac{1}{e(p')}} \geq \frac{1}{2}\left(1-\frac{1}{e(p)}\right)\left(1-\frac{1}{e(p')}\right)
\]
\[
 \geq\frac{1}{2}\left(1-\frac{1}{p}-\frac{1}{p'}\right)\geq\frac{1}{2}-\frac{1}{a}.
\]
From \cite{Bouzouina} we know that
\[
 I_{-1/2}=\frac{\Gamma\left(\frac{1}{4}\right)^4}{4\pi}\hspace{2ex} \text{and} \hspace{2ex} I_{1/2}=\frac{16\pi^3}{\Gamma\left(\frac{1}{4}\right)^4}.
\]
The facts collected above and Lemma \ref{F} are used in the following estimate. From now on we replace some terms that do not depend on $a$ or $b$ by the word 'const':
\[
\frac{\delta}{\pi}\int^b_a\int^b_a\frac{1}{p}\;\frac{1}{p'}\;K_0(p,p')\;dpdp'
\]
\[
\geq\frac{\delta}{2\pi}\left(\frac{\Gamma\left(\tfrac{1}{4}\right)^4}{4\pi}\;\ln\left(\frac{b}{a}\right)+\frac{16\pi^3}{\Gamma\left(\tfrac{1}{4}\right)^4}\;\ln\left(\frac{b}{a}\right)\left(1-\frac{2}{a}\right)\right)+\text{const}
\]
\[
\geq\delta\left(\frac{\Gamma\left(\tfrac{1}{4}\right)^4}{8\pi^2}+\frac{8\pi^2}{\Gamma\left(\tfrac{1}{4}\right)^4}\right)\ln\left(\frac{b}{a}\right)\left(1-\frac{2}{a}\right)+\text{const}
\]
\[
=\frac{\delta}{\delta_c}\ln\left(\frac{b}{a}\right)\left(1-\frac{2}{a}\right)+\text{const}.  
\]
Furthermore
\[
 \int^b_a e(p)\frac{1}{p^2}\;dp\leq \int^b_a \frac{p+1}{p^2}\;dp\leq \ln\left(\frac{b}{a}\right)+1.
\]
Thus 
\[
 \langle f,b_0f\rangle \leq c\left(1-\frac{\delta}{\delta_c}\left(1-\frac{2}{a}\right)\right)\ln\left(\frac{b}{a}\right)+\text{const}.
\]
Since $\frac{\delta}{\delta_c}>1$, we can choose $a$ large enough so that
\[
 \left(1-\frac{\delta}{\delta_c}\left(1-\frac{2}{a}\right)\right)<0.
\]
If we keep $a$ fixed and let $b$ go to infinity, $\langle f,b_0f\rangle$ tends to negative infinity, while $||f||^2=\left(\tfrac{1}{a}- \tfrac{1}{b}\right)\leq 1$. This concludes the proof.
\end{proof}
The proof of Theorem \ref{C} is now complete.
\subsection*{Acknowledgement}
The author thanks H. Siedentop for giving valuable advice.
  
\bibliographystyle{amsplain}

\begin{bibdiv}
 \begin{biblist}
\addcontentsline{toc}{chapter}{Literaturverzeichnis}
  \bib{Bouzouina}{article}{
    title={Stability of the two-dimensional Brown-Ravenhall operator},
    author={Bouzouina, A.},
    journal={Proc. R. Soc. Edinb., Sect. A, Math.},
    volume={132},
    date={2002},
    pages={1133--1144}
  }
  \bib{BurenkovEvans}{article}{
    title={On the evaluation of the norm of an integral operator associated with the stability of one-electron atoms.},
    author={Burenkov, V. I.},
    author={Evans, W. D.},
    journal={Proc. R. Soc. Edinb., Sect. A, Math.},
    volume={128},
    date={1998},
    pages={993--1005}
  }
  \bib{Eps}{article}{
    title={The spectrum of relativistic one-electron atoms according to Bethe and Salpeter},
    author={Evans, William Desmond},
    author={Perry, Peter},
    author={Siedentop, Heinz},
    journal={Commun. Math. Phys.},
    volume={178},
    date={1996},
    pages={733--746}
  }
  \bib{Gradshteyn}{book}{
    title={Tables of integrals, series and products},
    author={Gradshteyn, I. S.},
    author={Ryzhik, I. M.},
    date={2007},
    publisher={Academic Press},
    edition={7}
  }    
  \bib{Steinweiss}{book}{
    title={Introduction to Fourier analysis on Euclidean spaces},
    author={Stein, Elias M.},
    author={Weiss, Guido},
    date={1975},
    publisher={Princeton University Press}
  }    
  \bib{Tix}{article}{
    title={Strict Positivity of a relativistic Hamiltonian due to Brown and Ravenhall},
    author={Tix, C.},
    journal={Bull. Lond. Math. Soc.},
    volume={30},
    date={1998},
    pages={283--290}
  }
  \bib{Whittaker}{book}{
    title={A course of modern analysis},
    author={Whittaker, E. T.},
    author={Watson, G. N.},
    date={1927},
    publisher={Cambridge University Press},
    edition={4}
  }    
 \end{biblist}

\end{bibdiv}
\end{document}